\documentclass[copyright,creativecommons]{eptcs}

\usepackage{iftex}
\usepackage{graphics,graphicx}
\usepackage{amsmath}
\usepackage{bm}
\usepackage{float}
\usepackage{amsthm}

\ifpdf
  \usepackage{underscore}         
  \usepackage[T1]{fontenc}        
\else
  \usepackage{breakurl}           
\fi

\theoremstyle{plain}
\newtheorem{theorem}{Theorem}[section]

\newtheorem{corollary}[theorem]{Corollary}

\theoremstyle{definition}

\newcommand{\T}{{\mathcal T}}

\title{Counting Colored Tilings on Grids and Graphs}
\author{Jos\'e L. Ram\'irez
\institute{Departamento de Matem\'aticas\\ Universidad Nacional de Colombia\\ Bogot\'a,  Colombia}
\email{jlramirezr@unal.edu.co}
\and
Diego Villamizar
\institute{Escuela de Ciencias Exactas e Ingenier\' ia\\  Universidad Sergio Arboleda\\
 Bogot\'a, Colombia}
\email{diego.villamizarr@usa.edu.co}
}

\begin{document}
\maketitle

\begin{abstract}
In this paper, we explore some generalizations of a counting problem related to tilings in grids of size $2\times n$, which was originally posed as a  question on Mathematics Stack Exchange (Question 3972905). In particular, we consider this problem for the product of two graphs $G$ and  $P_n$, where $P_n$ is the path graph of $n$ vertices. We  give explicit bivariate  generating functions for some specific cases.
\end{abstract}

\section{Introduction}
Question 3972905 in Mathematics Stack Exchange asks for the number of ways to partition a tile $2 \times n$  into $s$ parts. That is the number of  different configurations (tilings) in a grid of size $2\times n$ with exactly $s$  polyominoes  using 2 colors.  For example, if $n=4$ we have 12 configurations with exactly 4 polyominoes, see Figure \ref{Fig2}.
\begin{figure}[ht]
\centering
  \includegraphics[scale=1]{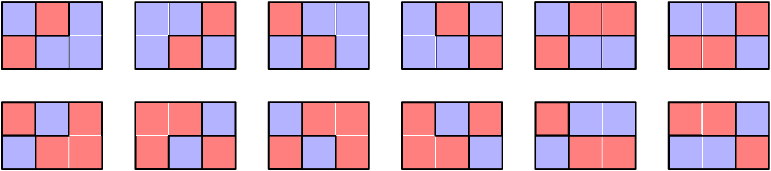}
  \caption{Configurations of a grid $2\times 3$ with exactly 4 polyominoes.}
  \label{Fig2}
\end{figure}

In \cite{RamVil}, we study this problem for a general grid of size $m \times n$ and $k$ colors. We employ generating functions to provide a partial solution to this problem for the cases $m=1,2,3$. Specifically, if $c(n, i)$ represents the number of different tilings of a $2 \times n$ grid with exactly $i$ polyominoes and using two colors, then
\begin{align*}
\sum_{n, i\geq 1}c(n,i)x^ny^i&=\frac{2xy(1+y-x(1-y)(1-2y))}{1-x \left(2+y+y^2\right)+x^2(1-y) \left(1-5y^2-2y\left(1-2y\right)\right)}\\
&=(2 y + 2 y^2) x + (2 y + 12 y^2 + 2 y^4) x^2 + (2 y + 30 y^2 + 
    18 y^3 + \bm{12 y^4} + 2 y^6) x^3 \\
    & \ \ \ \ + (2 y + 56 y^2 + 102 y^3 + 56 y^4 + 
    24 y^5 + 14 y^6 + 2 y^8) x^4 + O(x^5).
\end{align*}

Figure \ref{Fig2} shows the colored tilings  corresponding to the  bold coefficient in the above series.

This counting problem was explored by Richey \cite{Richey} in 2014. Specifically, he showed that \linebreak $\lim _{n,m\to \infty} e(m, n)/mn$ exists and is finite, where $e(m, n)$ is the expected number of polyominoes on the $m\times n$ grid. Mansour \cite{Mansour} considers this problem for bicolored tilings ($k=2$) for $m=1,2,3$ using automata. A related problem was addressed by Bodini during GASCOM 2022, referred to as \emph{rectangular shape partitions} \cite{Bodini}.

\section{Colored Tilings of Grids}

 Let $\T_{m,n}^{(k)}$ denote the set of tilings of an $m\times n$ grid with polyominoes colored with one of $k$ colors, such that adjacent polyominoes are colored with different colors. An element of  $\T_{m,n}^{(k)}$ is called a \emph{$k$-colored tiling}.  Given a $k$-colored tiling  $T$ in $\T_{m,n}^{(k)}$, we use $\rho(T)$  to denote the number of polyominoes  in $T$.  For fixed positive integers $m$ and $k$, we define the bivariate generating function 
 \begin{align}\label{genfunc1}
     C_m^{(k)}(x,y):=\sum_{n\geq 1}x^{n}\sum_{T\in\T_{m,n}^{(k)}}  y^{\rho(T)}.
 \end{align}
Note that the coefficient of $x^ny^i$ in $C_m^{(k)}(x,y)$ is the number of $k$-colored tilings of an $m\times n$  grid with exactly $i$ polyominoes. Let $c_{m,k}(n,i)$ denote the coefficient of $x^ny^i$ in the generating function $C_m^{(k)}(x,y)$. In \cite{RamVil}, we derive explicit generating functions for the cases $m=1, 2, 3$.   Additionally, we  introduce a variation of this problem for  hexagonal grids.

The combinatorial problem can be described in terms of graphs.  Let $G_1 = (V_1,E_1)$ and $G_2=(V_2,E_2)$ be two undirected graphs. The product of $G_1$ and $G_2$ is defined as 
$G_1 \times G_2 = (V_1\times V_2, E_{G_1\times G_2})$, where 
\begin{multline*}
    E_{G_1\times G_2}=\{\{(v_1,v_2),(w_1,w_2)\}: (v_1=w_1 \text{ and }\{v_2,w_2\}\in E_{2})\text{ or }  (v_2=w_2 \text{ and }\{v_1,w_1\}\in E_{1})\}.
\end{multline*}
Let $P_n$ be a \emph{path graph}, that is a simple graph with $n$ vertices arranged in a linear sequence in
such a way that two vertices are adjacent if they are consecutive in the sequence,
and are non-adjacent otherwise. A \emph{grid graph} of size $m \times n$ is defined as the  product $P_m \times P_n$, and it is denoted by $L_{m,n}$.

Let $G=(V,E)$ be an undirected graph. Two non-empty disjoint subsets $V_1, V_2\subseteq V$ are \emph{neighbors} if there is an edge $(v_1, v_2)\in E$ such that $v_1\in V_1$ and $v_2\in V_2$.   A \emph{$k$-colored partition of size} $s$ of the vertices $V$ of $G$ is a partition of the set $V = \bigcup_{i=1}^s V_i$ such that for each $V_i$, the induced graph is connected, all vertices in $V_i$ are colored with exactly one of $k$ colors, and any pair $V_i$ and $V_j$ of neighbors are colored with different colors.

For example, Figure \ref{Fig2b} (left) shows a 3-colored partition of size $7$ of the  grid graph $L_{3,8}$, and  the Figure \ref{Fig2b} (right) shows the corresponding tiling in $\T_{3,8}^{(3)}$.

\begin{figure}[ht]
\centering
  \includegraphics[scale=0.8]{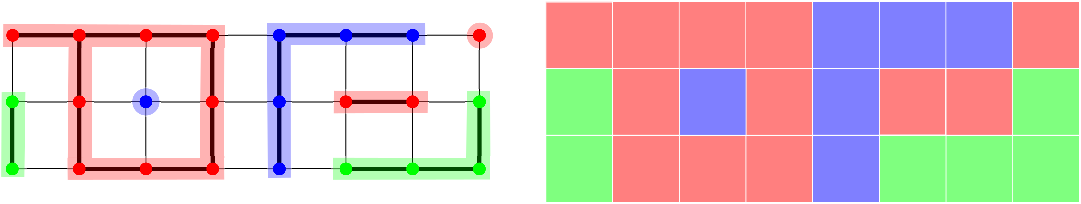}
  \caption{A 3-colored partition of size $7$ of $L_{3,8}$.}
  \label{Fig2b}
\end{figure}

Let $G$ be an undirected graph.  We denote by  $\T_{n}^{(k)}(G)$  the set of $k$-colored partitions of $G \times P_n$. Given a $k$-colored partition  $T$ in $\T_{n}^{(k)}(G)$, we use $\rho(T)$  to denote the size of the partition.    For fixed positive integers $m$ and $k$, we define the bivariate generating function 
 \begin{align*}
     C_G^{(k)}(x,y):=\sum_{n\geq 1}x^{n}\sum_{T\in\T_{n}^{(k)}(G)}  y^{\rho(T)}.
 \end{align*}

It is clear that  $C_m^{(k)}(x,y)=C_{P_m}^{(k)}(x,y)$.

\section{The complete graph case.}
In this section we analyze the case when $G=K_m$, where $K_m$ is the complete graph of size $m$. For example,  Figure \ref{Fig3} shows a 2-colored partition of size $4$ of the   graph $K_5\times P_4$.

\begin{figure}[ht]
\centering
  \includegraphics[scale=1.2]{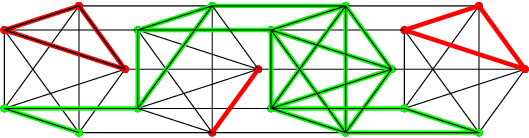}
  \caption{A 2-colored partition of size $4$ of $K_5\times P_4$.}
  \label{Fig3}
\end{figure}

\subsection{The case $m=3$.}
In this section we give the explicit bivariate generating function for the 2-colored partitions of  $K_3 \times P_n$ for all $n\geq 1$.

\begin{theorem}\label{teo2col}
The bivariate generating function $T(x,y)=C_{K_3}^{(2)}(x,y)$ is given by
\begin{align*}
T(x,y)=\frac{2 x y (1 + 3 y - x (3 - 7 y + 4 y^2))}{1 - x (4 + 3 y + y^2) + x^2 (3 - 7 y + 3 y^2 + y^3)}.    
\end{align*}
Moreover, $[x^n]T(x,1)=8^n$.
\end{theorem}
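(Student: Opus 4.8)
The plan is to obtain the formula for $T(x,y)$ by a transfer-matrix argument, viewing $K_3\times P_n$ as a stack of $n$ triangular layers in which consecutive layers are joined by three vertical edges linking corresponding vertices. The first point is that in any $2$-colored partition the three vertices of a single layer lie in at most two blocks, since three pairwise-adjacent blocks would require three colors. Thus a layer has one of four block patterns (all three vertices together, or a pair together with a singleton, in one of three ways), and equipping these blocks with a proper $2$-coloring gives exactly eight possible boundary states; the $S_3$-action on the rows and the color-swap symmetry let one work with only two essentially different types, $A$ (one block) and $B$ (pair $+$ singleton), carried with suitable multiplicities.

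I would then record, for each boundary state $S$, the generating function $F_S(x,y)=\sum_{n\ge1}x^n\sum_T y^{\rho(T)}$ over the $2$-colored partitions $T\in\T_n^{(2)}(K_3)$ whose last layer is in state $S$, so that $T(x,y)=\sum_S F_S(x,y)$. Appending one new layer gives a linear system $\mathbf F=\mathbf b(x,y)+M(x,y)\,\mathbf F$: the vector $\mathbf b$ collects the weights of the one-layer configurations, and the transition $M$ encodes the choices of the new layer's block pattern together with, for each of the three vertical edges, whether its endpoints share a block; each transition is weighted by $x$ for the new layer and by $y$ for every block of the new layer not attached to the previous one (a newly created part), and is admissible only when the induced coloring stays proper and no two previously distinct blocks get merged. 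After the symmetry reduction this is only a $2\times2$ system; solving it and simplifying yields the stated rational function. I expect the genuine work to be in the transition analysis — keeping the connectivity bookkeeping correct so as not to create spurious merges, and charging the variable $y$ exactly once per part — while the concluding elimination is routine.

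For the last assertion I would give the quick argument: setting $y=1$ makes the numerator polynomial $3-7y+4y^2$ and the denominator polynomial $3-7y+3y^2+y^3$ both vanish, so $T(x,1)=\dfrac{8x}{1-8x}=\sum_{n\ge1}8^nx^n$ and $[x^n]T(x,1)=8^n$. As a check, this also follows bijectively and independently of the formula: for any graph $G$ a $2$-colored partition is literally the same datum as an arbitrary $2$-coloring of $V(G)$ — recover the blocks as the monochromatic connected components, and note that adjacent blocks are then automatically colored differently — so $|\T_n^{(2)}(K_3)|=2^{|V(K_3\times P_n)|}=2^{3n}=8^n$.
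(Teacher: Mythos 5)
Your plan is the paper's proof: the authors decompose by the state of the last triangle---$\mathcal{A}_n$ (monochromatic, equivalently one block) versus $\mathcal{B}_n$ (two colors, equivalently pair plus singleton)---which is exactly your symmetry-reduced two-state transfer matrix, and they solve the resulting linear system for $T=T_1+T_2$. The one thing your write-up defers is, however, the only substantive content of that proof: the explicit transition weights. The paper's argument consists precisely of the two case tables yielding
$T_1 = 2xy + (x+xy)T_1 + 2x\,T_2$ and $T_2 = 6xy^2 + 6xy\,T_1 + (3x + 2xy + xy^2)T_2$,
and without producing these coefficients (checking, configuration by configuration, which attachments are color-admissible and how many new parts each creates) you have a correct strategy but not a derivation of the specific rational function. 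Your handling of the second claim is fine and in fact goes beyond the paper: the evaluation at $y=1$ is correct ($3-7y+4y^2$ and $3-7y+3y^2+y^3$ both vanish there, leaving $T(x,1)=8x/(1-8x)$), and your observation that a $2$-colored partition of any graph is the same datum as an arbitrary $2$-coloring of its vertices (recover the blocks as monochromatic connected components) gives a clean independent confirmation that $|\T_{n}^{(2)}(K_3)|=2^{3n}=8^n$, which the paper does not spell out.
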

\begin{proof}
Let $\mathcal{A}_{n}$ and $\mathcal{B}_{n}$ denote the sets of colored tilings in $\T_{n}^{(2)}(K_3)$, such that in the first case the last triangle is colored  with only one color, while in $\mathcal{B}_n$, the last triangle is colored with the two colors.

Now, we define the bivariate generating functions:
 \[T_1(x,y):=\sum_{n\geq 1}x^{n}\sum_{T\in\mathcal{A}_{n}}  y^{\rho(T)} \text{\quad and \quad } T_2(x,y):=\sum_{n\geq 1}x^{n}\sum_{T\in\mathcal{B}_{n}}  y^{\rho(T)}.\] 
It is clear that $T(x,y)=T_1(x,y)+T_2(x,y)$.

Let $T$ be a $2$-colored partition in $\mathcal{A}_n$. If $n=1$, then $T=K_3$, and its contribution  to the generating function is the term $2xy$ because it has to be monochromatic. If $n>1$, then $T$  may be decomposed as either $T_1 K_3$ or $T_2 K_3$, where  $T_1\in \mathcal{A}_{n-1}$, and $T_2\in \mathcal{B}_{n-1}$.  Depending on whether  the colors of the last two triangles  coincide or not, we obtain the cases  given in Table \ref{deco1}. 

 \begin{table}[ht]
        \centering 
        \begin{tabular}{|c|}
        \hline
\includegraphics[scale=0.9]{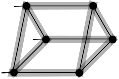}     \hspace{1cm}  
\includegraphics[scale=0.9]{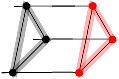}  \\ 
 $xT_1(x,y) \quad \quad xyT_1(x,y)$       \\ \hline \hline
\includegraphics[scale=0.9] {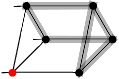}
\hspace{1cm}  \includegraphics[scale=0.9] {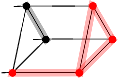}   \\ 
 $xT_2(x,y) \quad \quad xT_2(x,y)$ \\
    \hline 
        \end{tabular}
        \caption{Cases for the generating function $T_1(x,y)$.}
        \label{deco1}
    \end{table}
From this decomposition,  we obtain the functional equation    
$$ T_1(x,y) = 2xy + xT_1(x,y) + xyT_1(x,y) + xT_2(x,y) + xT_2(x,y).$$ 
For the colored tilings in $\mathcal{B}_{n}$ we obtain the different decompositions given in Table \ref{deco2}. From this decomposition we obtain the functional equation:
\begin{align*}
    T_2(x,y)= 6xy^2 + 3xyT_1(x,y) + 3xyT_1(x,y)  
    +3xT_2(x,y)  +  xy^2T_2(x,y) + 2xyT_2(x,y) .
\end{align*}

    \begin{table}[H]
        \centering 
        \begin{tabular}{|p{8cm}|}
        \hline
  \includegraphics[scale=0.9] {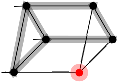}  \hspace{1cm}  \includegraphics[scale=0.9] {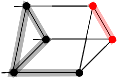}  \\ 
 $3xyT_1(x,y) \hspace{1cm} 3xyT_1(x,y)$       \\ \hline \hline
        \includegraphics[scale=0.9] {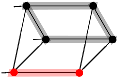}   \hspace{0.6cm} 
  \includegraphics[scale=0.9] {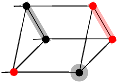}  \hspace{0.6cm} 
   \includegraphics[scale=0.9]{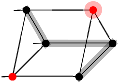} 
 \\ $xT_2(x,y)$  \quad $xy^2T_2(x,y)$ \quad  $xyT_2(x,y)$   \\
 \includegraphics[scale=0.9]{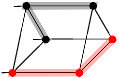} \hspace{0.6cm}  
  \includegraphics[scale=0.9]{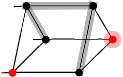} \hspace{0.6cm}  
    \includegraphics[scale=0.9]{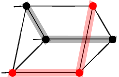}  \\
 $xT_2(x,y)$ \quad $xyT_2(x,y)$ \quad $xT_2(x,y)$ \\
     \hline 
        \end{tabular}
        \caption{Cases for the generating function $T_2(x,y).$
        }
        \label{deco2}
    \end{table}
    Since $T(x,y)=T_1(x,y)+T_2(x,y)$, we have a system of three linear equations with three unknowns $T(x,y), T_1(x,y)$, and $T_2(x,y)$. Solving the system for $T(x,y)$  we obtain the desired result. 
\end{proof}

As a series expansion, the generating function $T(x,y)$ begins with
\begin{multline*}
T(x,y)=(2 y + 6 y^2) x + (2 y + 44 y^2 + 12 y^3 + 6 y^4) x^2 + (2 y + 
    178 y^2 + 218 y^3 + 84 y^4 + 24 y^5 + \bm{6 y^6}) x^3 \\
    + (2 y + 600 y^2 + 1674 y^3 + 1100 y^4 + 528 y^5 + 150 y^6 + 36 y^7 + 
    6 y^8) x^4 + O(x^5).
\end{multline*}
Figure \ref{Fig4} shows the 2-colored partitions   corresponding to the  bold coefficient in the above series.

\begin{figure}[H]
\centering
  \includegraphics[scale=0.75]{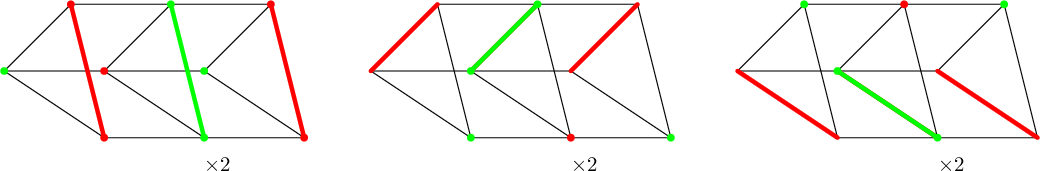}
  \caption{All $2$-colored partitions in $\T_{3}^{(2)}(K_3)$.}
  \label{Fig4}
\end{figure}

\begin{corollary}\label{expval1}
The expected number for the size of the partition when the colors assigned to each vertex are selected uniformly in $\T_{3}^{(2)}(K_3)$ is given by
$$\frac{2^{3 n-5}(37 + 19 n)}{2^{3n}}.$$
\end{corollary}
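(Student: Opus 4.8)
The plan is to express the expectation as a ratio of two coefficient extractions from the generating function $T(x,y)=C_{K_3}^{(2)}(x,y)$ of Theorem~\ref{teo2col}. First I would pin down the probability model. Every element of $\T_{n}^{(2)}(K_3)$ is the partition of the $3n$ vertices of $K_3\times P_n$ into its maximal monochromatic connected components with respect to some $2$-coloring of the vertices: no part of a $2$-colored partition can be a proper connected monochromatic subset of a larger one, since then two neighboring parts would share a color. Hence the map sending a $2$-coloring of the vertices to the induced partition is a bijection from the $2^{3n}$ colorings onto $\T_{n}^{(2)}(K_3)$, and under it $\rho(T)$ is exactly the size of the partition. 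This is consistent with the identity $[x^n]T(x,1)=8^n=2^{3n}$ from Theorem~\ref{teo2col}, and shows that choosing each vertex color uniformly at random is the same as choosing $T\in\T_{n}^{(2)}(K_3)$ uniformly at random.

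Therefore the required expectation equals
\[
E[\rho]=\frac{1}{2^{3n}}\sum_{T\in\T_{n}^{(2)}(K_3)}\rho(T)=\frac{1}{2^{3n}}\,[x^n]\,D(x),\qquad D(x):=\left.\frac{\partial}{\partial y}\,T(x,y)\right|_{y=1},
\]
because differentiating $\sum_{T}y^{\rho(T)}$ in $y$ and setting $y=1$ returns $\sum_{T}\rho(T)$. Writing $T(x,y)=N(x,y)/Q(x,y)$ with $N,Q$ the numerator and denominator appearing in Theorem~\ref{teo2col}, the quotient rule gives $D(x)=\bigl(N_y(x,1)\,Q(x,1)-N(x,1)\,Q_y(x,1)\bigr)/Q(x,1)^2$. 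Here $Q(x,1)=1-8x$, and after computing $N(x,1)$, $N_y(x,1)$ and $Q_y(x,1)$ the numerator collapses to a degree-two polynomial, yielding $D(x)=\dfrac{2x(7-37x)}{(1-8x)^2}$.

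It then remains to read off the coefficient. From $\dfrac{1}{(1-8x)^2}=\sum_{n\ge 0}(n+1)8^nx^n$ one gets $[x^n]\dfrac{x}{(1-8x)^2}=n\,8^{n-1}$ and $[x^n]\dfrac{x^2}{(1-8x)^2}=(n-1)8^{n-2}$, so
\[
[x^n]D(x)=14\,n\,8^{n-1}-74\,(n-1)\,8^{n-2}=8^{n-2}\,(38n+74)=2^{3n-5}(37+19n),
\]
and dividing by $2^{3n}$ gives the asserted formula. The only points that need care are the bijective identification of the two uniform models (so that $\partial_y T(x,y)|_{y=1}$ really computes the relevant expectation) and keeping the elementary algebra in the differentiation and coefficient extraction correct; I do not expect any substantial obstacle beyond that.
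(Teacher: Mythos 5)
Your proposal is correct: the bijection between vertex $2$-colorings and elements of $\T_{n}^{(2)}(K_3)$ justifies the probability model, and the computation $\partial_y T(x,y)\big|_{y=1}=2x(7-37x)/(1-8x)^2$ with $[x^n]$ equal to $2^{3n-5}(37+19n)$ checks out against the series data (e.g.\ $n=1$ gives $14/8$ and $n=2$ gives $150/64$). The paper states the corollary without proof, but this is evidently the intended derivation from Theorem~\ref{teo2col}, so your argument matches the paper's approach.
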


\nocite{*}
\bibliographystyle{eptcs}
\bibliography{generic}
\end{document}